\newtheorem{theorem}{Theorem}
\def\BibTeX{{\rm B\kern-.05em{\sc i\kern-.025em b}\kern-.08em
    T\kern-.1667em\lower.7ex\hbox{E}\kern-.125emX}}
\begin{document}

\title{ Joint Optimization of Prompt Security and System Performance in Edge-Cloud LLM Systems
\thanks{The research was supported 
in part by the Chinese National Research Fund (NSFC) under Grant 62272050 and Grant 62302048; in part by the Guangdong Key Lab of AI and Multi-modal Data Processing, United International College (UIC), Zhuhai under 2023-2024 Grants sponsored by Guangdong Provincial Department of Education; in part by Institute of Artificial Intelligence and Future Networks (BNU-Zhuhai) and Engineering Center of AI and Future Education, Guangdong Provincial Department of Science and Technology, China; Zhuhai Science-Tech Innovation Bureau under Grant No. 2320004002772; in part by Guangdong Basic and Applied Basic Research Fund 2021A1515110190 and in part by the Interdisciplinary Intelligence SuperComputer Center of Beijing Normal University (Zhuhai). 
Corresponding authors: Tianhui Meng (tianhuimeng@uic.edu.cn) and Weijia Jia (jiawj@bnu.edu.cn).}
}
\author{
	\IEEEauthorblockN{Haiyang Huang\IEEEauthorrefmark{1}, Tianhui Meng\IEEEauthorrefmark{2} and Weijia Jia\IEEEauthorrefmark{1}\IEEEauthorrefmark{2}} 
	\IEEEauthorblockA{\IEEEauthorrefmark{1}Institute of Artificial Intelligence and Future Networks, Beijing Normal University, Zhuhai, China} 
    \IEEEauthorblockA{\IEEEauthorrefmark{2}Department of Computer Science, BNU-HKBU United International College, Zhuhai, China}
}

\maketitle

\begin{abstract}
Large language models (LLMs) have significantly facilitated human life, and prompt engineering has improved the efficiency of these models. However, recent years have witnessed a rise in prompt engineering-empowered attacks, leading to issues such as privacy leaks, increased latency, and system resource wastage. Though safety fine-tuning based methods with Reinforcement Learning from Human Feedback (RLHF) are proposed to align the LLMs, existing security mechanisms fail to cope with fickle prompt attacks, highlighting the necessity of performing security detection on prompts.
In this paper, we jointly consider prompt security, service latency, and system resource optimization in Edge-Cloud LLM (EC-LLM) systems under various prompt attacks. 
To enhance prompt security, a vector-database-enabled lightweight attack detector is proposed. 
We formalize the problem of joint prompt detection, latency, and resource optimization into a multi-stage dynamic Bayesian game model. The equilibrium strategy is determined by predicting the number of malicious tasks and updating beliefs at each stage through Bayesian updates. 
The proposed scheme is evaluated on a real implemented EC-LLM system, and the results demonstrate that our approach offers enhanced security, reduces the service latency for benign users, and decreases system resource consumption compared to state-of-the-art algorithms.
\end{abstract}

\begin{IEEEkeywords}
Prompt attack, edge-cloud, LLM, resource optimization, Bayesian game
\end{IEEEkeywords}

\section{Introduction}
Large language models (LLMs) are widely used in fields such as instruction understanding \cite{ouyang2022training}, information retrieval \cite{liu2024information}, software engineering \cite{nam2024using}, translation \cite{zhang2023prompting}, education \cite{kasneci2023chatgpt} and summarization \cite{koh2022empirical}. Companies are competing to launch their LLMs such as ChatGPT, Claude, and Qwen, which have significantly enhanced the efficiency of human activities.
Prompt engineering (PE) involves carefully crafting input prompts to ensure that LLMs generate outputs that meet user requirements, thus improving response quality \cite{sorensen2022information}. 
However, PE can also be exploited by attackers to design malicious prompts, resulting in prompt attacks that manipulate LLMs into producing unsafe content, including the ``grandma exploit" \cite{zhou2024quantifying} and jailbreak \cite{wei2024jailbroken}. These prompt attacks can have significant societal impacts, leading to the spread of misinformation \cite{gehman2020realtoxicityprompts}, discrimination \cite{kolla2024llm} and violations of social norms and morals \cite{NEURIPS2023_a2cf225b}. Additionally, they can enable unauthorized access to private data.
For example, Microsoft’s LLM-integrated Bing Chat and GPT4 can be compromised by prompt injection attacks that expose sensitive information \cite{greshake2023not}. Furthermore, LLMs require substantial computational resources, and processing malicious prompts increases system resource consumption and average latency for benign users \cite{liu2023optimizing}.

The safety of LLM outputs has attracted significant attention. 
To mitigate the risks associated with prompt attacks, model developers have implemented safety mechanisms through safety fine-tuning, utilizing the advanced technique of Reinforcement Learning from Human Feedback (RLHF).
Particularly, Dai \textit{et al.} proposed a Safe RLHF method to improve the security of LLM outputs \cite{dai2023safe}.
This method involves training a reward model (helpfulness) using preference data and a cost model (harmlessness) with safety data, followed by fine-tuning through the proximal policy optimization reinforcement learning method. However, Wei~\textit{et al.} point out that safety training will fail because the two objectives of helpfulness and harmlessness are sometimes in conflict \cite{wei2024jailbroken}. 

 


Compared to merely employing RLHF for safety tuning, 
proactive detection before processing can significantly enhance LLM system safety and improve Quality of Service (QoS) by reducing malicious resource consumption. While LLMs can be directly utilized to detect harmful prompts \cite{armstrong2022using}, the conflict between helpfulness and harmlessness in safety fine-tuning methods often leads to performance degradation. In addition, models with large numbers of parameters can be resource-intensive, negatively affecting system throughput.
However, detecting all prompts before inputting them into the LLM will increase resource consumption and service latency. To solve this problem, we constructed an Edge-Cloud LLM (EC-LLM) architecture that deploys the detection model at the edge, leveraging edge computing to reduce network throughput and latency \cite{wang2023edge}. 

The EC-LLM system faces many challenges. The first is how to improve the accuracy of detection objects. Existing anomaly detection models often face the problem of false positives \cite{bagheri2024ace}, where benign prompts are mistakenly classified as malicious, negatively impacting QoS. Thus, we innovatively deployed a \textit{vector database} (VDB) at each edge server (ES) containing datasets of normal and malicious prompts. By evaluating the similarity between incoming prompts and the data in the VDB, as well as the results from a Bert-based detector, we aim to improve detection reliability.


The second challenge is how to design a strategy to decide whether to detect incoming prompts to optimize the trade-off between system security and performance.
Limited resources in ES can lead to increased service latency and resource consumption due to excessive use of detectors \cite{liang2024latency, liang2023grouping}. Accurately identifying malicious prompts can minimize service latency for benign users and decrease resource consumption for LLMs. Current game-based strategies cannot effectively formulate strategies based on dynamic environments.
To mitigate this issue, we employ a \textit{multiple-stage} Bayesian Game model to simulate the interactions between users and system defenders. A sequential marginal analysis method is used to predict the number of malicious prompts under conditions of incomplete information.
By solving the Bayesian equilibrium solutions at each stage, the allocation of detection resources in the EC-LLM system is optimized. 
This strategic approach ensures that detection resources are judiciously allocated, balancing the trade-off between security and performance.




In this work, we are the first to jointly optimize system resources and service latency in EC-LLMs under prompt attacks. We develop a novel EC-LLM architecture that integrates a Bert-based prompt detector and a VDB to enhance prompt security. We then formulate the joint prompt detection, latency, and resource optimization (PDLRO) problem as a multi-stage incomplete information Bayesian game model. To solve the Bayesian equilibrium at each stage, we proposed a belief update algorithm and a malicious number prediction algorithm. Our evaluations on a real-word testbed demonstrate that our proposed algorithm outperforms all baseline algorithms. 
The main contributions of this work can be summarized as follows:
\begin{itemize}
    \item To the best of our knowledge, this is the first attempt to optimize system resources and service latency in EC-LLMs under prompt attacks.
    We formulate the PDLRO problem as a multi-stage incomplete information Bayesian game model and identify the Bayesian equilibrium solution to address it.
    \item A sequential marginal analysis method is applied to predict the number of malicious prompts for system defenders. Additionally, to improve identification accuracy at each stage, a belief updating algorithm based on history data and VDB is proposed. This approach effectively reduces system service latency and resource consumption.
    \item A practical EC-LLM architecture system with a detector and a VDB is constructed to evaluate our method. We conducted experiments on this real-world system testbed, and experimental results demonstrate the efficacy of our strategy. Compared to the baseline strategy, the average latency per token is reduced by approximately 16.24\%, the average GPU memory consumption is reduced by nearly 17.7\%, and the average GPU utilization is reduced by about 17.87\%, respectively.
\end{itemize}

The rest of the paper is organized as follows. 
Section \uppercase \expandafter {\romannumeral 2} describes the system model and formulation. The multiple-stage Bayesian game model is presented in Section \uppercase \expandafter {\romannumeral 3}. Section \uppercase \expandafter {\romannumeral 4} gives the details of our resource optimization strategy. The experiment results are shown in Section \uppercase \expandafter {\romannumeral 5}. The related works are introduced in Section \uppercase \expandafter {\romannumeral 6}. We conclude \uppercase \expandafter {\romannumeral 7}.

\section{System Model and Problem Formulation}

\subsection{System Model}
In our EC-LLM scenario, the task set is denoted by $\mathcal{K}= \{1, \dots, K\}$, generated by a user set $\mathcal{U}=\{1,\cdots, U\}$. Tasks include benign tasks $\mathcal{K}^{\text{b}}$ from benign users $\mathcal{U}^{\text{b}}$ and malicious users $\mathcal{U}^{\text{m}}$, and malicious tasks $\mathcal{K}^{\text{m}}$ from malicious users $\mathcal{U}^{\text{m}}$. The sets satisfy  $\mathcal{K}^{\text{b}} \cup\mathcal{K}^m= \mathcal{K}$ and $\mathcal{U}^{\text{b}} \cup\mathcal{U}^{\text{m}}= \mathcal{U}$. Each task $k\in\mathcal{K}$ is characterized by four attributes: the input prompt $x_k$, the generating user $u\in\mathcal{U}$, the required floating-point of operations per second (FLOPS) $f_k$, and the number of prompt tokens $o_k$, which is obtained by using a tokenizer to convert the prompt into tokens within the LLM. The LLM is tasked with generating an appropriate response for each prompt.

In our system, there is a set of $M$ ESs, denoted as $\mathcal{M}^e=\{1, \dots, M \}$. 
The GPU FLOPS capability available on these ESs is defined as $ \mathcal{G}^e=\{G_1, \dots, G_M \}$. Additionally, the GPU bandwidth for each server is specified as $\mathcal{B}^e =\{B^e_1, \dots, B^e_M\}$. Each ES has an owner, denoted as $\mathcal{L}=\{L_1, \dots, L_{M} \} $, and each is considered a defender. 
Each ES hosts a VDB, denoted as $\mathcal{V} = \{V_1, \dots, V_M \}$, which stores data in the format $\{v_q, s_q \}_{q=1}^{Q}$. Here, $v_q$ is a vectorized prompt derived from security datasets. The safety label $s_q \in \{0,1\}$ indicates whether $v_q$ is benign. Specifically, $s_q=1$ denotes that $v_q$ is malicious, while $s_q=0$ indicates a benign prompt. The architecture of the EC-LLM system is depicted in Fig.~\ref{ari}.

\begin{figure}
\centering

\includegraphics[width=0.48\textwidth]{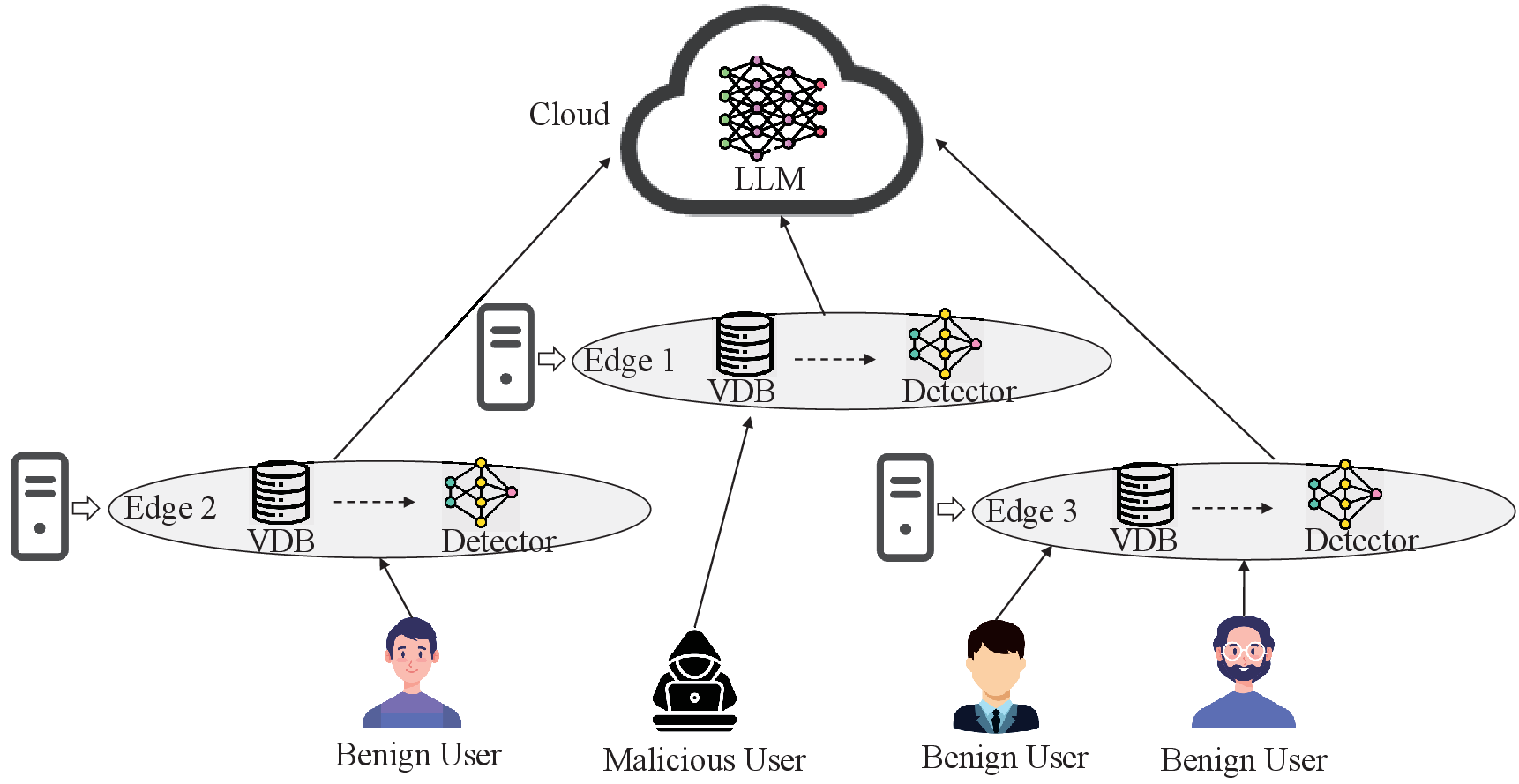}
	\caption{An illustration of EC-LLM architecture.}
	\label{ari}
 \vspace{-1em}
\end{figure}

\subsection{Security Detector}
 Malicious prompts pose a risk of generating unsafe outputs, leading to resource wastage when processed by LLMs. To differentiate between benign and potentially malicious prompts, a Bert-based detector is deployed on each ES to ascertain the safety of prompt $x_k$. If a malicious prompt is detected by the detector, it will not be sent to the cloud. Instead, the user will receive a direct response stating: ``This prompt is unsafe."

\subsection{Cost Model}

The detector's response is categorized as either benign or malicious. 
Consequently, the detection time and resource usage for the detector can be estimated. The number of parameters in the detector is denoted by $A$. For the transformer-based model, according to OpenAI \cite{kaplan2020scaling}, the floating-point of operations (FLOPs) per token is:
\begin{align}
\label{eq2}
C^{f} = 2A + 2n_{layer}n_{ctx}d_{attn},
\end{align}
where $n_{layer}$ is the number of layers, $n_{ctx}$ is the maximum number of input tokens and $d_{attn}$ is the dimension of the attention output. In the inference process, for the detector, the parameters need to be loaded into GPU memory. The GPU bandwidth is denoted as $B^g$ (TB/s). For a batch size, the memory latency $t^{mem}$ (ms) is calculated as:
\begin{align}
\label{eq221}
t^{mem} = 2A/B_g,
\end{align}
where $2A$ indicates each parameter is 2 bytes. The FLOPs of the deployed ES detector are denoted as $C^g$. The computational latency for the detector with one batch size is calculated as follows:
\begin{align}
\label{eq222}
t^{com} = C^{f}/C^g.
\end{align}
When the batch size is small, the memory latency exceeds the computational latency, creating a bottleneck in detector inference. Hence, the detection time for prompt $x_k$ is
\begin{align}
\label{eq223}
d_k^d = o_k \cdot \max\{t^{mem},t^{com}\}.
\end{align}
The detection process may increase the latency for benign users. A binary variable $z_k\in\{0,1\}$ is used to indicate whether prompt $x_k$ is detected. If $z_k=1$, this prompt is detected; Otherwise $z_k=0$. The $d_k$ represents the latency for prompt $x_k$ after the detection step, and it can be calculated as follows: 
\begin{align}
\label{eq31}
d_k = z_k\cdot d_{k}^d + d_{k}^{L}, \forall k \in\mathcal{K}, 
\end{align}
where $d_{k}^{L}$ is the latency of LLM for processing task $k$. 
For LLM, the number of output tokens is unknown, so the processing time cannot be predicted accurately. We define a binary variable $\eta_k$ to express the detection result of the detector. If $\eta_k=0$, the detection result is benign. $\eta_k=1$ means that prompt $x_k$ is malicious. If the prompt $x_k$ is not selected for detection, $\eta_k = 0$. The resources consumed after detection for task $k$ can be calculated as:
\begin{align}
\label{eq32}
C^D_k = z_k( o_k C^f - \eta_k\cdot f_k)+f_k.
\end{align}



\subsection{Formulation}
We divide time into $T$ time slots. For time slot $t \in \mathcal{T}$, the task set $\mathcal{K}_t$ is defined as $\mathcal{K}_t = \{k_{t,1},\cdots,k_{t,{K_1}} \}$, where $K_1$ is the number of prompts at time slot $t$ and $\mathcal{K} = \bigcup_{t=1}^T \mathcal{K}_t$. 
Each task $k_{t,\lambda}$ has four attributes: the input prompt $x_{t,\lambda}$, the source user $u'\in \mathcal{U}_t$, the required FLOPS $f_{{t,\lambda}}$, and the number of tokens $o_{{t,\lambda}}$. 
The objective of our PDLRO problem is to reduce the total latency for benign users and reduce the resources consumed by malicious users and detection, which can be formulated as follows:
\begin{align}
     \min \sum_{t\in\mathcal{T}} (\ & \alpha_1 \sum_{k_{t,\lambda}\in \mathcal{K}_b} d_{k_{t,\lambda}} + \alpha_2 \sum_{k_{t,\lambda}\in \mathcal{K}} C_{k_{t,\lambda}}^D) \\
    \text{s.t.} \ \ \
    & z_{k_{t,\lambda}} \in \{0,1\}, \forall k_{t,\lambda} \in \mathcal{K}_t. \\
    & \eta_{k_{t,\lambda}} \in \{0,1\}, \forall k_{t,\lambda} \in \mathcal{K}_t. \\
    & \alpha_1, \alpha_2 > 0,
\end{align}
where $\alpha_1$ and $\alpha_2$ are two weights used to balance latency and resource consumption. 
The problem is inherently challenging as it is an online problem where the detected result $\eta_{k_{t,\lambda}}$ is unknown before making the detection decision $z_{k_{t,\lambda}}$. 
\begin{theorem}
This problem is NP-hard. 
\end{theorem}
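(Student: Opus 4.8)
The plan is to establish NP-hardness by a polynomial-time reduction from a classical NP-hard problem; I would use the $0$-$1$ Knapsack problem (equivalently, Subset-Sum/Partition), since the binary detection variables $z_{k_{t,\lambda}}$ together with the weighted latency-plus-resource objective naturally encode a selection-under-budget structure. First I would pass to the decision version of the PDLRO problem: given a threshold $\Theta$, does there exist an assignment of the binary variables achieving objective value at most $\Theta$? Establishing NP-hardness for this decision version suffices, and a ``yes'' certificate (the vector of $z_{k_{t,\lambda}}$) is polynomially checkable.

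A crucial observation guiding the construction is that, as written, the per-task contributions $d_{k_{t,\lambda}}$ and $C^D_{k_{t,\lambda}}$ are each separable in $z_{k_{t,\lambda}}$, so the hardness cannot come from the objective in isolation; it must be induced by the finite GPU FLOPS and bandwidth budgets $G_m$, $B^e_m$ of the edge servers specified in the system model, which couple the detection decisions taken within a time slot. Accordingly, I would build a reduction instance with a single time slot and a single edge server ($M=1$) of FLOPS capacity $G$, and map each Knapsack item $i$ of weight $w_i$ and value $p_i$ to a task whose detection cost $o_{k}C^f$ equals $w_i$ and whose resource saving from correct screening (the difference $f_{k}-o_{k}C^f$ realized when a malicious prompt is detected, so that $\eta_{k}=1$) equals $p_i$; concretely one sets $f_k = w_i + p_i$. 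With $\alpha_1,\alpha_2$ and the latency terms $d_k^d,d_k^L$ calibrated appropriately, minimizing the PDLRO objective subject to the induced capacity constraint $\sum_{k} z_{k}\,o_{k}C^f \le G$ becomes equivalent to choosing a maximum-value subset of items that fits in the knapsack.

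The steps, in order, are: (i) fix the decision version and the reduction target; (ii) build the single-slot, single-server instance and specify the mapping of item weights and values onto the task parameters $f_k$, $o_k$, and the objective weights $\alpha_1,\alpha_2$; (iii) prove that any feasible Knapsack packing yields a detection schedule of the matching objective value and conversely, so that a packing of value at least $P$ exists if and only if the PDLRO objective is at most the corresponding $\Theta$; and (iv) verify the construction is polynomial in the input size. I expect the main obstacle to be step (iii): the objective mixes a latency term taken over benign tasks only with a resource term taken over all tasks, and because the resource term depends jointly on $z_{k}$ and the induced label $\eta_{k}$, the calibration must be arranged so that the ``do-not-detect'' baseline cost is a fixed additive constant and only the savings from correctly screened malicious tasks vary with the decision. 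Guaranteeing that the signs and magnitudes of these constants make the equivalence exact, while ensuring that detecting a benign task is never profitable (so the optimizer's choices coincide with legitimate item selections), is the delicate part of the argument.
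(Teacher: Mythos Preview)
Your proposal is correct and follows the same route as the paper: a reduction from $0$--$1$ Knapsack in which task FLOPS play the role of item weights, the edge server's bounded GPU FLOPS/bandwidth supplies the knapsack capacity, and the binary detection choices $z_{k}$ are the item selections. The paper's own argument is little more than a structural analogy (it names the weight, value, capacity, and decision-variable correspondences and asserts that Knapsack arises as a subproblem), whereas your plan---passing to the decision version, fixing a single slot and server, calibrating $f_k,o_k,\alpha_1,\alpha_2$ so that screening savings encode item values, and verifying both directions of the equivalence---supplies the details the paper omits; in particular, your explicit observation that the objective alone is separable and that hardness must be forced through the capacity constraint is a point the paper leaves implicit.
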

\begin{proof}
The Knapsack Problem involves selecting items with specific weights and values to maximize the total value within a weight limit, and it is NP-hard. We can map it to our problem. 

In the Knapsack Problem, each item has a weight and a value. In our problem, each task prompt $x_k$ has required FLOPS (analogous to weight) and $\alpha d_{k_{t,\lambda}} + \beta C_{k_{t,\lambda}}^D$ analogous to value. The Knapsack Problem has a maximum weight capacity. In our problem, the ESs have a limited amount of GPU FLOPS and bandwidth. The decision variable of the Knapsack Problem is whether the item is included in the knapsack. In our problem, the decision variable is whether prompt $x_k$ is detected by the ES. 
The objective of the Knapsack Problem is to maximize the total value of selected items. In our problem, the objective is to minimize the total latency for benign users and the resources consumed by malicious users and detection. 
By constructing an instance of our problem that mimics the structure of the Knapsack Problem, we can show that solving our problem requires solving the Knapsack Problem as a subproblem, thus establishing NP-hardness.
\end{proof}



\section{Bayesian Game Model}
\subsection{Game Model Formulation}
The PDLRO problem involves three parties: benign users, malicious users, and defenders. The interactions among these parties are complex, and the defenders' decisions regarding whether to conduct detection must be based on information. These characteristics make it highly appropriate to model the situation using a $T$-stage Bayesian game with incomplete information, corresponding to $T$ time slots.


Our game model is based on the following assumptions:
\begin{enumerate}
\item All players are rational.
\item Each player is unaware of the strategies, actions, or utility functions of other players.
\item The total number of players remains constant over time.
\item Malicious players randomly send either benign or malicious prompts. If a malicious player sends a benign prompt, they are considered a benign user at this stage.
\end{enumerate}

The basic $T$-stage game model is denoted as $\mathcal{G}_t=(\mathcal{N},\Phi, \mathcal{A}_t, \mathcal{P}_t )$. The set of players is defined as $\mathcal{N} = \{ \mathcal{U}^{\text{b}}, \mathcal{U}^{\text{m}}, \mathcal{L} \}$. The set of types, $\Phi$, includes all types for all players, specifically $\Phi=\{\text{b},\text{m}, \text{d} \}$, corresponding to benign users, malicious users, and defenders, respectively. The set of actions at time slot $t$, $\mathcal{A}_t$, is defined as $\mathcal{A}_t=\mathcal{A}_t^{\text{b}} \times \mathcal{A}^{\text{m}}_t \times \mathcal{A}^{\text{d}}_t$. The action set for benign users $\mathcal{A}_t^{\text{b}}= \mathcal{M}^e$, involves selecting an ES to minimize latency. The action for malicious users, $\mathcal{A}_t^{\text{m}}= \mathcal{M}^e$, involves selecting an ES to target. The defender's action set, $\mathcal{A}^{\text{d}}=\{2^{\mathcal{K}_t}\}$, is defined as the power set of $\mathcal{K}_t$, representing various combinations of prompts used for detection. The function $p_t(\phi)\in\mathcal{P}_t$ represents the prior probability that a player belongs to type $\phi\in\Phi$. 

The information structure at time slot $t$ is denoted by $\Theta_t= \{ (h_{t,n})_{n\in\mathcal{N}}, \tau(\cdot,\cdot) \}$. It includes $h_{t,n}$, which represents the private information of each player at time slot $t$, inaccessible to other players. $\tau(\cdot,\cdot)$ denotes the common prior belief shared among all participants. The information structure is designed to meet the following criteria:
\begin{align}
\label{tij}
\sum_{{n}\in \mathcal{N}} \tau(h_{t,n},\phi) = p_t(\phi), \forall \phi\in\Phi.
\end{align}

The basic $T$-stage game and the information structure form an incomplete information game. 
In this game model, a pure strategy is a definite action choice, while mixed strategies involve randomness to maximize the total reward. The strategy of player $n$ is defined as $\pi_{t,n}:h_{t,n}\to \mathcal{A}_t $, mapping the information set to the action set. $\mu(\phi|h_{t,n})$ is the belief of player $n$, which represents the estimate of type $\phi$ after receiving information $h_{t,n}$.

\begin{figure}
	\centering
	\includegraphics[width=0.5\textwidth]{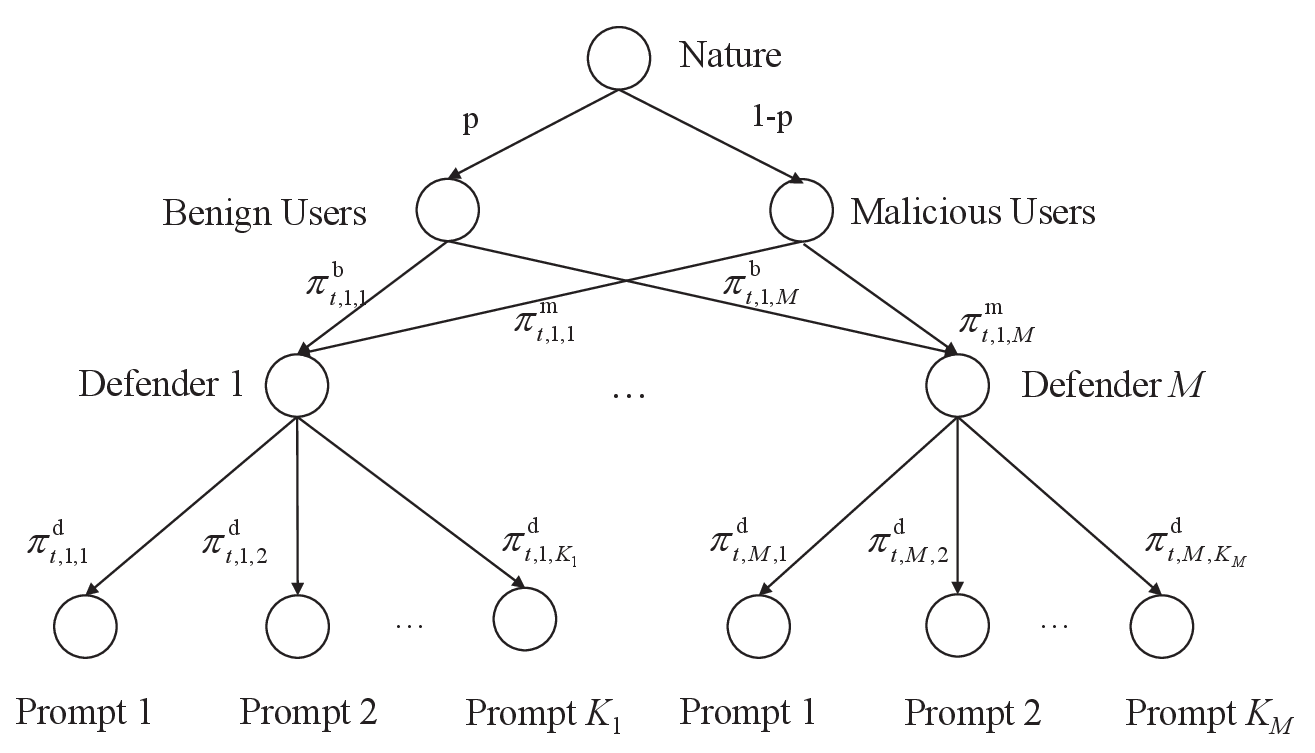}
	\caption{The game tree of our game model.}
	\label{tree}
\end{figure}




For benign player $n \in \mathcal{N}$, the strategy is denoted as $\pi^{\text{b}}_{t,n,m}$, which represents the probability that player $n$ selects edge $m$ at time slot $t$. The strategies of other players are defined as $\pi_{t,-n,m}^{\text{b}}$. The utility function is denoted as follows:
\begin{align}
\label{eq13}
 U_{t,n}^{\text{b}}(\pi^{\text{b}}_{t,n,m}, \pi_{t,-n,m}^{\text{b}}) = \sum_{m \in\mathcal{M}}\pi^{\text{b}}_{t,n,m} d_{t,n,m}, t \in \mathcal{T}, 
\end{align}
where $d_{t,n,m}$ denotes the latency experienced by player $n$ when a task is processed on ES $m$. Additional latency arises if benign prompts are incorrectly identified during detection.
The utility function for malicious user $n\in\mathcal{N}$ is:
\begin{align}
\label{eq14}
 U_{t,n}^{\text{m}}(\pi^{\text{m}}_{t,n,m}, \pi_{t,-n,m}^{\text{m}}) & = \sum_{m \in\mathcal{M}} [ \pi^{\text{m}}_{t,n,m} ((1-p^D)f_{k_{t,\lambda}} \\ \notag
& + o_{k_{t,\lambda}}C^{f}) + (1- \pi^{\text{m}}_{t,n,m})f_{k_{t,\lambda}}],
\end{align}
where $p^D$ is the F1 score of detector.
The utility function for defenders is: 
\begin{align}
\label{eq15}
 U_{t,n}^{\text{d}}(\pi^{\text{d}}_{t,n,\lambda}, \pi^{\text{d}}_{t,-n,\lambda}) = & \alpha_3 \sum_{k_{t,\lambda}\in \mathcal{K}_t} \pi^{\text{d}}_{t,n, \lambda} \cdot d_{t,n,\lambda}
+ \notag \\ \alpha_4 \sum_{k_{t,\lambda}\in \mathcal{K}_t} (\pi^{\text{d}}_{t,n,\lambda} (o_{k_{t,\lambda}}C^f 
 + & \mu(\phi| h_{t,n,\lambda}) f_{k_{t,\lambda}})+f_{k_{t,\lambda}}),
\end{align}
where $\alpha_3$ and $\alpha_4$ are positive weights used to balance latency and resource consumption. 
Therefore, the problem can be reformulated as:
\begin{align}
     \min \ & U_{t,n}^{\text{d}}(\pi^{\text{d}}_{t,n,\lambda}, \pi^{\text{d}}_{t,-n,\lambda}) \\
    \text{s.t.} \ \ \
    & \pi^{\text{d}}_{t,n,\lambda} \in [0,1], \forall n\in\mathcal{L}, k_{t,\lambda} \in \mathcal{K}_t. \\
    & \sum_{k_{t,\lambda} \in \mathcal{K}_t.} \pi^{\text{d}}_{t,n,\lambda} = K'_{t,n} , \forall n\in \mathcal{L}. \\ 
    & \mu(\phi|h_{t,n,\lambda}) \in [0,1], \forall k_{t,\lambda} \in \mathcal{K}_t, n\in \mathcal{L},
\end{align}
where $K'_{t,n}$ is the estimated number of malicious tasks that need to be detected for defender $n$. In this formulation, the strategy $\pi^{\text{d}}_{t,-n,\lambda}$ substitutes for the variable $z_{k_{t, \lambda}}$, while the belief $\mu(\phi|h_{t,n,\lambda})$ takes the place of the unknown $\eta_{k_{t,\lambda}}$, thereby converting the original problem into a linear format when $K'_{t,n}$ is estimated. This transformation allows for more flexible strategy formulation, enabling the capture of uncertainty in player interactions and decision-making processes. In the context of rational players, it effectively captures the underlying dynamics of the game, particularly when players formulate their actions based on expected outcomes.

\subsection{Game Model Analysis}
The PDLRO problem is a dynamic game with incomplete information. Harsanyi transformation is a standard method to convert an incomplete information game to an imperfect information game \cite{harsanyi1967games}. Initially, a prior move by Nature is introduced to determine the type of each player. Each player knows their type but unknown other players' type. Subsequently, benign and malicious players select an ES, after which defenders receive tasks and decide whether to conduct detection. The gaming process is illustrated by the game tree in Fig.~\ref{tree}. Since we use Bayes' rule to update beliefs, the game model is a multi-stage Bayesian Dynamic Game.

For a benign or malicious player $n$ ($\phi\in\{\text{b},\text{m}\}$ and $m\in\mathcal{M}^e$), the vector $ \pi_{t,{n}}^{\phi} = \{\pi_{t,{n},1}^{\phi}, \cdots, \pi_{t,n,m}^{\phi}, \cdots, \pi_{t,{n},M}^{\phi} \}$ has the same dimension as the number of ESs $M$. For the defender, the dimension of $\pi_{t,{n}}^{\text{d}}$ corresponds to the number of tasks $K_1$. 
A strategy combination $(\pi^{\text{b},*}_{t,n}, \pi^{\text{m},*}_{t,n}, \pi^{\text{d},*}_{t,n})$ is a Bayesian Nash equilibrium (BNE) of $\mathcal{G}_t$ if, at each time slot $t$, for each player $n\in\mathcal{N}$ and $\phi\in\Phi$, the following conditions hold:
\begin{align}
\label{eq3}
 U_{t,n}^{\phi}(\pi^{\phi,*}_{t,n}, \pi_{t,-n}^{\phi,*}) \geq U_{t,n}^{\phi}(\pi_{t,n}^{\phi}, \pi_{t,-n}^{\phi,*}), \forall t\in\mathcal{T}.
\end{align}
where $\pi_{t,-n}^{\phi,*}$ is the optimal strategy of opponents and $\pi_{t,n}^{\phi,*}$ is the optimal strategy for player $n$. 

\begin{theorem}
In our game model, the pure and mixed strategy equilibrium exists.
\end{theorem}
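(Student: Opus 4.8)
The plan is to prove existence one stage at a time and then splice the stage equilibria together along the belief trajectory. First I would fix a stage $t$ and apply the Harsanyi transformation already invoked above, turning the incomplete-information game into an imperfect-information normal-form game in which each player evaluates its expected utility against the common prior $\tau(\cdot,\cdot)$. The two claims then follow from two classical existence results applied to this stage game.

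For the mixed-strategy claim I would note that the underlying action sets are all finite: benign and malicious users each pick an element of $\mathcal{M}^e$, and each defender picks a detection subset from $2^{\mathcal{K}_t}$. A finite-action game always admits a mixed-strategy equilibrium by Nash's theorem, and since beliefs are pinned down by Bayes' rule this yields a mixed-strategy Bayesian Nash equilibrium satisfying \eqref{eq3}. For the pure-strategy claim I would instead work with the continuous reformulation and verify the hypotheses of the Debreu--Glicksberg--Fan theorem: each strategy set is non-empty, compact, and convex; each utility is jointly continuous in the full profile; and each utility is quasi-concave (or quasi-convex, for the players who minimize) in its own strategy. The user strategy $\pi^{\phi}_{t,n}$ ranges over the probability simplex on the $M$ servers, while the defender strategy ranges over the polytope $\{\pi^{\text{d}}_{t,n}\in[0,1]^{K_1}:\sum_{k_{t,\lambda}\in\mathcal{K}_t}\pi^{\text{d}}_{t,n,\lambda}=K'_{t,n}\}$; both are compact and convex, the latter as the intersection of the unit hypercube with an affine hyperplane. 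Continuity and the concavity requirement are immediate because \eqref{eq13}, \eqref{eq14}, and \eqref{eq15} are affine in the deciding player's own variables, and affine functions are simultaneously quasi-concave and quasi-convex.

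With these conditions in hand, the best-response correspondence of the stage game is non-empty, convex-valued, and upper hemicontinuous, so Kakutani's fixed-point theorem delivers an equilibrium. The linearity also supplies the bridge to genuine pure actions: because each objective is linear, every best response is attained at a vertex of its feasible set, and a vertex of the simplex selects a single ES while---since $K'_{t,n}$ is an integer---a vertex of the defender polytope is a $\{0,1\}$ vector that detects exactly $K'_{t,n}$ tasks. Thus pure best responses always exist, which is precisely what makes the pure-strategy equilibrium statement meaningful alongside the mixed one.

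The hard part will not be the single-stage existence but the coupling across the $T$ stages. Two difficulties stand out. First, the defender's feasible polytope depends on the estimated malicious count $K'_{t,n}$, which is itself a function of the current belief $\mu(\phi\mid h_{t,n,\lambda})$, so the strategy space moves with the beliefs; I would hold $K'_{t,n}$ fixed within each stage's fixed-point computation, so that the polytope is well defined, and update it only between stages. Second, a genuine Bayesian equilibrium requires the beliefs that enter \eqref{eq15} to agree with those produced by the Bayes update along the equilibrium path; I would close this loop via the one-stage-deviation principle, showing that the stage-wise fixed points together with the consistent belief trajectory form an equilibrium of the full multi-stage game. Verifying this belief-strategy consistency is the step I expect to demand the most care.
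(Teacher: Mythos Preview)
Your proposal is correct and follows essentially the same route as the paper: both arguments establish mixed-strategy existence from finiteness of the action sets via Nash's theorem, and pure-strategy existence by observing that the stage utilities \eqref{eq13}--\eqref{eq15} are affine in each player's own variable, hence quasi-concave, so that the standard compact--convex existence theorem applies.

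Where you differ is in scope and care. The paper's proof is terse: it records that the type and action spaces are finite, asserts compactness and convexity, checks linearity of each utility, and then cites \cite{fudenberg1991game} for both the pure and mixed conclusions. You spell out the polytopes explicitly (the simplex for users, the slice of the hypercube for defenders), name Debreu--Glicksberg--Fan and Kakutani, and add the vertex-extraction step showing that linear best responses always admit pure realizations. More substantively, you raise the multi-stage coupling---the dependence of $K'_{t,n}$ on beliefs and the need for Bayes-consistent belief trajectories---which the paper's proof does not address at all; there, existence is effectively asserted stage by stage without closing the loop. Your extra discussion is therefore a genuine strengthening rather than a departure, and for the purposes of matching the paper you may safely omit it.
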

\begin{proof}
In our game model, the type space $\Phi$ includes the benign user, malicious user, and the defender, and is finite. The pure strategy space, corresponding to the action set, is also finite. Consequently, both type and strategy spaces are compact and convex. 
For benign players, $d_{t,n,m}$ is constant. According to Eqn.~(\ref{eq13}), $U_{t,n}^{\text{b}}$ is linear. Eqn.~(\ref{eq14}) can be transformed into:
\begin{align}
\label{eq142}
  U_{t,n}^{\text{m}}(\pi^{\text{m}}_{t,n,m}, \pi_{t,-n,m}^{\text{m}}) & = \sum_{m \in\mathcal{M}} [ \pi^{\text{m}}_{t,n,m} (p^D f_{k_{t,\lambda}} \\ \notag
  & + o_{k_{t,\lambda}}C^{f}) + f_{k_{t,\lambda}}],
\end{align}
where $f_{k_{t,\lambda}}$, $o_{k_{t,\lambda}}$ and $C^f$ are constants. Thus, the utility function is linear. For defender $n$, both $\mu(\phi | h_{t,n,\lambda})$ and $f_{k_{t,\lambda}}$ are constants. The utility function can be transformed into:
\begin{align}
\label{eq152}
 U_{t,n}^{\text{d}}(\pi^{\text{d}}_{t,n,\lambda}, \pi^{\text{d}}_{t,-n,\lambda}) = &  \sum_{k_{t,\lambda}\in \mathcal{K}_t} \pi^{\text{d}}_{t,n, \lambda} \cdot (\alpha_3 d_{t,n,\lambda}
+ \notag \\ \alpha_4 ( o_{k_{t,\lambda}}C^f 
 + & \mu(\phi| h_{t,n,\lambda}) f_{k_{t,\lambda}})+f_{k_{t,\lambda}})).
\end{align}

This function is also linear. 
For every type of player, the utility function is linear, so the utility function is quasi-concave. According to \cite{fudenberg1991game}, our game model satisfies these conditions that both type and strategy space are compact and convex, and the utility function is quasi-concave, and then the pure strategy equilibrium exists. 

According to \cite{fudenberg1991game}, in any finite non-cooperative game, there must be at least one Nash equilibrium in the sense of mixed strategies. Our game is a non-cooperative game, so a mixed strategy exists.
\end{proof}


\begin{theorem}
In our multi-stage game model, mixed strategies are at least as effective as pure strategies.
\end{theorem}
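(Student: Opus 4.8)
The plan is to prove the claim by the standard containment argument, specialized to our setting and supported by the linearity of the utilities established in the preceding proof. The core observation is that every pure strategy is a degenerate mixed strategy, so the mixed-strategy set contains the pure-strategy set; optimizing each player's utility over the larger set can therefore only help.

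First I would fix a stage $t$ and a player $n$ of type $\phi$, and write the mixed-strategy set as the probability simplex $\Delta(\mathcal{A}_t^{\phi})$ over the finite action set $\mathcal{A}_t^{\phi}$. A pure strategy corresponds to a vertex of this simplex---it places probability one on a single action---so the pure-strategy set is exactly the vertex set of $\Delta(\mathcal{A}_t^{\phi})$ and is contained in it. Holding the opponents' equilibrium profile $\pi_{t,-n}^{\phi,*}$ fixed, the best-response value over the mixed set is an optimum over a superset of the pure-response feasible region, hence at least as good as the pure best-response value: $\geq$ for the payoff-maximizing benign and malicious players and $\leq$ for the cost-minimizing defender. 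For the defender this is precisely the LP relaxation bound, since the reformulated objective in Eqn.~(\ref{eq152}) is linear in $\pi^{\text{d}}_{t,n,\lambda}\in[0,1]$ and the mixed program is the continuous relaxation of its binary $\{0,1\}$ pure counterpart.

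Next I would lift this per-stage inequality to the full $T$-stage game. Because the stage utilities are additive over $t\in\mathcal{T}$ and beliefs are refreshed by Bayes' rule between stages, the per-stage dominance accumulates to the cumulative expected payoff, taken in expectation over the type prior $\tau(\cdot,\cdot)$ and the evolving beliefs $\mu(\phi\mid h_{t,n})$. Finally I would invoke the preceding existence theorem: a mixed-strategy equilibrium always exists, whereas a pure-strategy equilibrium need not. Where a pure equilibrium exists, it is itself a degenerate mixed equilibrium, so the mixed formulation attains at least the same value; where none exists, only the mixed formulation yields a well-defined equilibrium. In either case mixed strategies are at least as effective as pure ones.

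The main obstacle will be the incomplete-information bookkeeping rather than the inequality itself. I must verify that embedding a pure strategy as a point mass respects each information set $h_{t,n}$, so that it remains an admissible strategy at every stage, and that the value comparison is carried out consistently in expectation over the induced type distribution. I would also note that the linearity of each $U_{t,n}^{\phi}$ places every best-response optimum at a simplex vertex, so the genuine advantage of mixing is located at the equilibrium level---via guaranteed existence---rather than in the best-response values themselves. Once the measurability check is settled, the containment of vertex set within simplex closes the argument.
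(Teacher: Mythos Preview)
Your core argument is correct and matches the paper's own proof: both rest on the observation that every pure strategy is a degenerate mixed strategy (the $\{0,1\}$ vertices sit inside the $[0,1]$ simplex), so optimizing over the mixed set can only do at least as well. The paper states this in one line after recording the pure and mixed feasible sets and invoking linearity/quasi-concavity; your simplex-containment and LP-relaxation framing is the same idea with more scaffolding.

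One small slip to fix: in your existence paragraph you write that ``a pure-strategy equilibrium need not'' exist. In this particular model the preceding theorem (Theorem~2) establishes that a \emph{pure} equilibrium does exist, so that contrast is not available here. Your containment argument already suffices without it; the multi-stage lifting and information-set bookkeeping you add are reasonable elaborations but are not used in the paper's short proof.
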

\begin{proof}
At time slot $t$, for player $n$, given the strategy of other players $ \pi_{t,-n_{t}}^{\phi}$, the strategy of player $n$ is calculated by:
\begin{align}
\label{eq23}
 \pi_{t,{n}}^{\phi} = \arg \min U_{t,n}^{\phi}(\pi_{t, n_{t}}^{\phi},\pi_{t,-n}^{\phi}), \phi \in \Phi. 
\end{align}

If $ \pi_{t,{n}}^{\phi}$ is a pure strategy and $\phi \in \{\text{b},\text{m} \}$, then $\pi_{t,{n},m}^{\phi}\in \{0,1\}$ and $\sum_{m\in\mathcal{M}^e} \pi_{t,{n},m} =1$. For a defender player $n$ and $k_{t,\lambda} \in \mathcal{K}_t$, $\pi_{t,{n},\lambda}^{\text{d}} \in \{0,1\}$. If $ \pi_{t,{n}}^{\phi}$ is a mixed strategy, then $\pi_{t,{n},m}$ and $\pi_{t,{n},\lambda}^{\text{d}} \in [0,1]$. Furthermore, given that the utility function is linear and quasi-concave, a pure strategy is a special case of a mixed strategy. Therefore, the mixed strategies are at least as effective as pure strategies.
\end{proof}

\section{Game Model-based Detection Resource Allocation}
In this section, we propose our Game Model-based Detection Resource Allocation (GMDRA) method.


\subsection{VDB}
VDB aids decision-making through belief updates. Defender $n$ reviews the most recent detection results and vector similarity metrics. The vector similarity between prompt $k_t$ and benign data is denoted as $p^s_{t,n}$, whereas the vector similarity with malicious data is represented as $p^{us}_{t,n}$. The defender’s belief indicates the safety probability for each prompt.

\subsection{Belief Updating}
Benign users, malicious users, and defenders all influence one another. Malicious users can mislead detection, increasing benign users’ latency. At each time slot $t$, all parties act on updated beliefs from prior information. For a benign player $n$, minimizing latency is crucial, relying on the previous response $y_{t-1,n}$, and the response latency $d_{t-1,n}$. Malicious users, with predicted number of tokens $o_{t-1,n}$, seek to consume more resources and increase latency. Each user's belief reflects their estimated detection probability.
The belief update algorithm is shown in Algorithm~\ref{alg1}. 

\begin{algorithm}[!t]
    \caption{Belief update algorithm} 
    \label{alg1} 
    \begin{flushleft}
    \textbf{Input}: Belief $\mu(\phi|h_{t-1,n})$, latency $d_{t-1,n}$, detection threshold $d_{\epsilon}$, token number $o_{t-1,n} $, predicted token number $o^{pre}_n$, safe similarity $p^s_{t,n}$, unsafe similarity $p^{us}_{t,n}$.
    
    \textbf{Output}: Belief $ \{\mu(\phi|h_{t,n})\}_{n\in\mathcal{N}}$. 
    \end{flushleft}
    \begin{algorithmic}[1] 
    \FOR {$n \in \mathcal{N} $ }
        \IF{$n\in\mathcal{U}^{\text{b}}$}
            \IF{ $y_{t-1,n}$ is benign}
                \STATE $p^{ld} = \gamma_1$, $p^{nld} = \gamma_2$
            \ELSE
                \STATE $p^{ld} = 1- \gamma_1$, $p^{nld} = 1 - \gamma_2$
            \ENDIF
            \STATE \textit{diff} $= d_{t-1,n}/o_{t-1,n} - d_{\epsilon}$ 
            \IF{\textit{diff} $> 0$}
                \STATE $p^{ld} = p^{ld} (1+100\cdot \textit{diff}) $
                \STATE $p^{nld} = p^{nld} (1/(1+100\cdot \textit{diff}))$
                \STATE $p^{ld} = min( p^{ld},1)$
            \ELSE
                \STATE $p^{ld} = p^{ld} (1 - 100\cdot abs(\textit{diff})) $
                \STATE $p^{nld} = p^{nld} (1/(1 - 100\cdot abs(\textit{diff})))$
                \STATE $p^{nld} = min( p^{nld},1)$
            \ENDIF
        \ENDIF
        \IF{$n\in\mathcal{U}^{\text{m}}$}
            \IF{ $y_{t-1,n}$ is malicious}
                \STATE $p^{ld} = \gamma_3$, $p^{nld} = \gamma_4$
            \ELSE
                \STATE $p^{ld} = 1- \gamma_3$, $p^{nld} = 1 - \gamma_4$
            \ENDIF
            \STATE $p^{ld} = p^{ld} (o^{pre}_{n}/o_{t-1,y_n} ) $
            \STATE $p^{nld} = p^{nld} (1 - o_{t-1,n}/o_{t-1,y_n} )$
            \STATE $p^{ld} = min( p^{ld},1)$, $p^{nld} = min( p^{nld},1)$
        \ENDIF
        \IF{$n\in\mathcal{U}^{\text{d}}$}
            \IF{ $y_{t-1,n}$ is malicious}
                \STATE $p^{ld} = \gamma_5$, $p^{nld} = \gamma_6$
            \ELSE
                \STATE $p^{ld} = 1- \gamma_5$, $p^{nld} = 1 - \gamma_6$
            \ENDIF
            \STATE $p^{ld} = p^{ld} p^s_{t,n} /(p^s_{t,n} + p^{us}_{t,n} ) $
            \STATE $p^{nld} = p^{nld} p^{us}_{t,n} /(p^s_{t,n} + p^{us}_{t,n} ) $
        \ENDIF
        \STATE $\mu(\phi|h_{t,n})= p^{ld} \mu(\phi|h_{t-1,n} ) / (p^{ld} \mu(\phi|h_{t-1,n}) + p^{nld} (1-\mu(\phi|h_{t-1,n}) )$
    \ENDFOR
    \end{algorithmic}
\end{algorithm}

Algorithm~\ref{alg1} updates beliefs based on the different types of players. Players assess their prompt's detection status and calculate the base likelihood values $p^{ld}$ and $p^{nld}$ based on the responses received (lines 3-6, 20-24 and 31-35). For benign user $n$, $d_\epsilon$ represents the ideal latency per token. The update to the likelihood function considers the difference between actual and ideal response latency (lines 10-11 and 14-15), capping values at 1 using the min function to ensure logical consistency. For malicious player $n$, the likelihood function is adjusted based on the actual versus expected number of response tokens (lines 25-26). Defender $n$ updates the likelihood using vector similarity measures (lines 35-36). Finally, beliefs are updated using the Bayesian formula in line 38. For the three different user types, the time complexity of conditional judgments, probability calculations, and belief updates are all $O(1)$, resulting in an overall time complexity of $O(N)$.



%

\subsection{Defender Strategy} 
After determining their beliefs, players act accordingly by choosing an ES to send their prompts. Both benign and malicious players contribute to a prompt queue, denoted as $\mathcal{K}_{t,m}$, from which the defender must develop a strategy to for detecting specific prompts.

However, defender $n$ lacks knowledge of the exact count of malicious prompts $K'_{t,n}$. To address this, we proposed a prediction algorithm based on sequential marginal analysis, which is an economic method that examines the marginal costs and benefits during continuous decision-making \cite{starrett1988foundations}. The prediction algorithm is detailed in Algorithm~\ref{alg2}.

\begin{algorithm}[!t]
    \caption{Malicious number predict algorithm} 
    \label{alg2} 
    \begin{flushleft}
    \textbf{Input}: The prompt set $\mathcal{K}_{t,m}$, and cost threshold $U^{cost}$.
    
    \textbf{Output}: Predicted number of malicious prompts $K'_{t,n}$.
    \end{flushleft}
    \begin{algorithmic}[1] 
    \STATE \textbf{Initialize}: $U = 0 $.
    \FOR{ $\lambda' \in \{0, \cdots, K_{t,m}\} $ } 
        \STATE Solve $opt(\lambda') = U^{\text{d}}_{t,n}$ under $K'_{t,n} = \lambda'$
        \IF{$\lambda' ==0$ }
            \STATE \textbf{continue} 
        \ELSE 
            \STATE $U = opt(\lambda') - opt(\lambda' -1)$
            \IF{$U > U^{cost}$ }
                \STATE \textbf{break}
            \ENDIF
        \ENDIF 
    \ENDFOR
    \STATE return $K'_{t,n} = \lambda' - 1$
    \end{algorithmic}
\end{algorithm}

In Algorithm~\ref{alg2}, we initiate a variable $U$ to record increments. In line 3, we compute the optimal value using the CLARABEL function from the Python cvxpy library with the predicted malicious prompts $\lambda'$. The increment is determined in line 7, and if $U$ exceeds a certain threshold, it indicates that further detections would entail excessive costs. Therefore, we return $\lambda'-1$. This method helps avoid the overallocation of detection resources. The complexity of CLARABEL depends on the number of variables, which in our game-theoretic framework includes three user types, two decision variables, and a finite set of constraints. As the total number of variables is a small constant that remains consistent across iterations, the overall complexity is effectively $O(1)$. Therefore, the time complexity of Algorithm \ref{alg2} is $O(K_{t,m})$.

Once the detection number $K'_{t,n}$ is established, the BNE strategy $\pi^{\text{d}}_{t,n,\lambda}$ is derived using a linear solver. Our GMDRA method then efficiently allocates detection resources to minimize system resource usage and latency.



\section{Evaluation}
In this section, comprehensive experiments are conducted to evaluate our methods. 

\subsection{Experimental Setup}
\textbf{System Setup}. Our system comprises four edge cloud clusters. Three devices serve as ESs, each equipped with an Intel(R) Core(TM) i9-10900K CPU featuring 10 physical cores, operating at a base frequency of 3.70 GHz, with a maximum turbo frequency of 5.3 GHz. These ESs run Rocky Linux 9.3 (Blue Onyx) with kernel version $5.14.0-362.8.1.el9\_3.x86\_64$, and are equipped with GeForce RTX 2070 SUPER GPUs. 
The fourth device functions as the cloud server, featuring an Intel Core i9-14900KF CPU with 24 cores (32 threads) and a base frequency of 3.20 GHz. It operates on Debian with kernel version 6.1.0-16-amd64, also featuring a GeForce RTX 4090 GPU. 

Deployed in the cloud is the Qwen1.5-7B-Chat\footnote{https://huggingface.co/Qwen/Qwen1.5-7B-Chat} LLM, accessible via an API developed from the api-for-open-llm project\footnote{https://github.com/xusenlinzy/api-for-open-llm}, encapsulated in a container image. Each ES hosts a Milvus VDB\footnote{https://github.com/milvus-io/milvus/tree/v2.3.3} using Docker Compose, at version v2.3.3, with public datasets for prompt injection and jailbreak uploaded for vector similarity matching. 
We employ the deberta-v3-base-prompt-injection-v2 detector\footnote{https://huggingface.co/protectai/deberta-v3-base-prompt-injection-v2}, a specialized model fine-tuned from Microsoft's deberta-v3-base \cite{he2021debertav3}, specifically designed to detect and classify prompt injection attacks.

\textbf{Datasets}. We use three datasets for testing and these data are not stored in the VDB.

\begin{enumerate}
    \item Classification\footnote{https://huggingface.co/datasets/jackhhao/jailbreak-classification}: This dataset includes 139 jailbreaks and 123 benign prompts. 
    \item Jailbreak-Benign-Malicious\footnote{https://huggingface.co/datasets/guychuk/benign-malicious-prompt-classification}: This dataset aims to help detect prompt injections and jailbreak intent.
    \item Malicious-Prompts\footnote{https://huggingface.co/datasets/ahsanayub/malicious-prompts}: 
This is a prompt injection and jailbreak dataset sampled from various sources.
\end{enumerate}



\subsection{The Effectiveness of the Architectures }
To evaluate the effectiveness of our architecture, we conducted an experiment comparing three setups: cloud-only, EC-LLM without detector, and EC-LLM with detector. In the cloud-only setup, users send prompts directly to the cloud. In the EC-LLM without detector architecture, prompts are sent to the edge and forwarded to the cloud without any filtering. In the EC-LLM with detector, a detector identifies and discards malicious prompts. 
Due to the large size of the Benign-Malicious and Malicious-Prompts datasets, we randomly selected 100 entries from each for testing, resulting in a total of 262 malicious and 200 benign prompts.

Throughput is measured as the number of output tokens per second from the LLM. Malicious prompts are regarded as positive, and the F1 score is used to measure the accuracy of the detector. The results are shown in TABLE.~\ref{tab1}. 

\begin{table}[h]
\centering
\caption{Throughput and F1 score of various architectures.}
\label{tab1}
\begin{tabular}{lccc}
\toprule
\textbf{}             &Cloud-only &\makecell[c]{EC-LLM\\ without detector} &\makecell[c]{EC-LLM\\ with detector} \\
\midrule
Throughput  & 43.089514801   & 43.475948882     & \textbf{46.140365851}  \\
F1 score & 0.804301075      & 0.804301075     & \textbf{0.927710843}     \\
\bottomrule
\end{tabular}
\end{table}

\begin{table*}[h]
\centering
\caption{Model Security Comparison}
\label{tab2}
\begin{tabular}{@{}lccccccccccc@{}}
\toprule
\makecell{Models} & & \multicolumn{3}{c}{Detector} & \multicolumn{3}{c}{LLM(GPT4)} & \multicolumn{3}{c}{LLM(Human)} \\
\cmidrule(lr){3-5} \cmidrule(lr){6-8} \cmidrule(l){9-11}
\makecell{Metric} & & Precision & Recall & F1 Score & Precision & Recall & F1 Score & Precision & Recall & F1 Score \\
\midrule
Jailbreak-Classification & & \textbf{0.9832} & \textbf{0.8417} & \textbf{0.9070} & 0.8571 & 0.6906 & 0.7649 & 0.9158 & 0.6259 & 0.7436 \\
Benign-Malicious & & 0.9762 & \textbf{0.8367} & \textbf{0.9011} & \textbf{1} & 0.44 & 0.6111 & 0.9565 & 0.44 & 0.6027 \\
Malicious-Prompts& & 0.9595 & \textbf{0.9726} & 0.9660 & \textbf{1} & 0.9315 & 0.9645 & \textbf{1} & 0.9452 & \textbf{0.9718} \\
\bottomrule
\end{tabular}
\vspace{-1em}
\end{table*}


From TABLE.~\ref{tab1}, it is clear that the throughput of the EC-LLM architecture exceeds that of the cloud-only setup, with the EC-LLM with detector further enhancing throughput. Moreover, the detector performs effectively on these datasets, achieving a higher F1 score compared to the direct use of the LLM without the detector. More details on the detection process in each dataset will be provided in the next section.

\subsection{The Effectiveness of detectors}
Given the dataset's imbalance, we utilize precision, recall, and F1 score to evaluate the performance of the models. For detector, true positive (TP) is defined as instances where both the detection result and prompt label indicate malicious. For the LLM, a malicious prompt that goes unanswered is considered TP, while receiving a normal response is classified as a false negative. A benign prompt receiving a normal response is a true negative, and if it is rejected, it is considered a false positive. We use both GPT-4 and manual evaluations to assess the LLM's responses.

The results for different models are presented in TABLE~\ref{tab2}.
From the table, it is evident that this LLM has some ability to resist prompt attacks, it is generally less effective than detectors. Deploying detectors at the edge effectively enhances the filtering of malicious prompts.

\subsection{The Influence of Different Detection Strategies}
To evaluate the effectiveness of our GMDRA with VDB strategy, we compare it against four strategies: None-detection, full detection, evolutionary genetic detection, and GMDRA without VDB. The evaluation involves a game configured with 20 players, divided into five stages, each lasting 75 seconds to account for task prompt processing times. Average latency per token is defined as the average time taken for each token (input + output) within a time slot. 
A fixed number of four malicious players randomly send benign or malicious prompts during each interval. The ES assignment for users is determined by the game. All tasks are completed within these time slots, and the results are illustrated in Fig.~\ref{fig4}.

Fig.~\ref{fig4a} presents the cumulative average latency per token for benign users across different strategies. The none-detection strategy incurs the highest latency, while the other strategies reduce latency by filtering out malicious prompts, which consume significant resources. The evolutionary genetic strategy further decreases latency compared to full detection. For the GMDRA without VDB approach, the absence of an integrated VDB forces the system to rely solely on historical data for evaluating current prompts, leading to an increase in latency as benign prompts from malicious players are incorrectly detected. In contrast, the GMDRA with VDB achieves the lowest latency widening over time. This approach reduces latency by approximately 16.24\% compared to the none-detection strategy.


In Fig.~\ref{fig4b} and Fig.~\ref{fig4c}, the trends for cumulative average GPU memory and GPU utilization show a consistent pattern. The full detection strategy identifies malicious prompts, resulting in lower average GPU memory consumption and reduced average GPU utilization compared to the none-detection approach. Although the evolutionary genetic strategy and the GMDRA without VDB method do lessen resource consumption compared to the none-detection strategy, they are not as effective as the full detection strategy. The GMDRA with VDB not only filters out malicious prompts but also reduces detection time. This results in a decrease in both cumulative average GPU memory and GPU utilization by shortening the task completion time within each time slot. Compared to the none-detection strategy, GMDRA with VDB reduces average GPU memory consumption by nearly 17.7\% and average GPU utilization by about 17.87\%.

\begin{figure*}[ht]  
\centering
\begin{minipage}[t]{0.31\textwidth}
    \centering
    \begin{subfigure}{1\linewidth}
	\includegraphics[width=\textwidth]{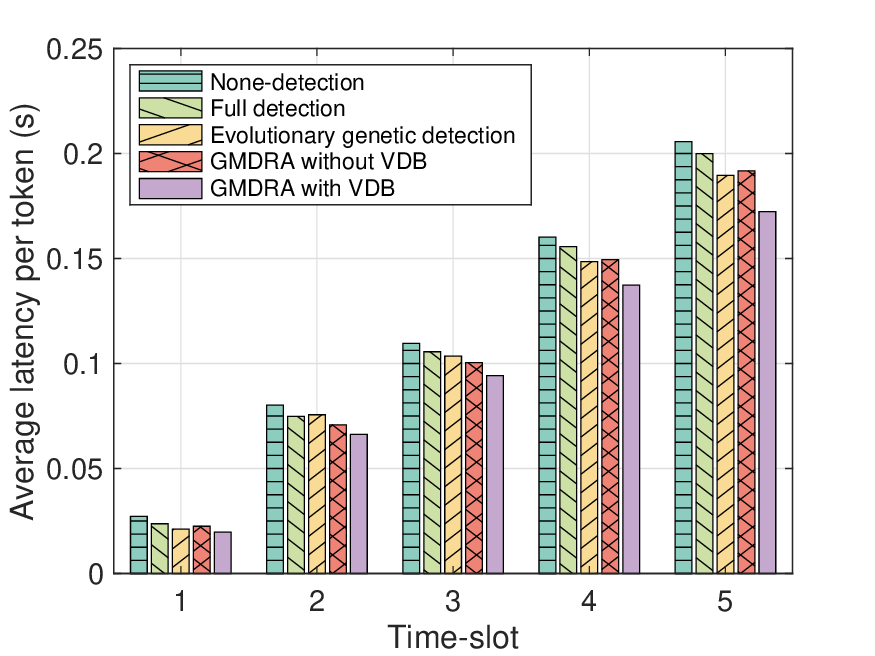}
	\caption{Average latency per token.}
	\label{fig4a}
    \end{subfigure}
\end{minipage}
\begin{minipage}[b]{0.31\textwidth}
    \centering
    \begin{subfigure}{1\linewidth}
        \includegraphics[width=\textwidth]{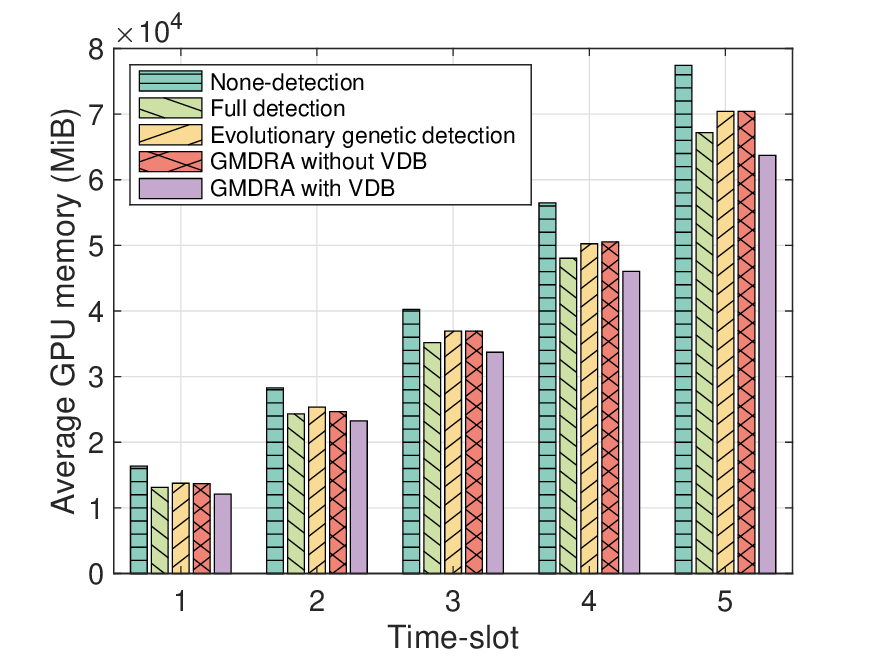}
        \caption{Average GPU memory.}
        \label{fig4b}
    \end{subfigure}
\end{minipage}
\begin{minipage}[b]{0.31\textwidth}
    \centering
    \begin{subfigure}{1\linewidth}
        \includegraphics[width=\textwidth]{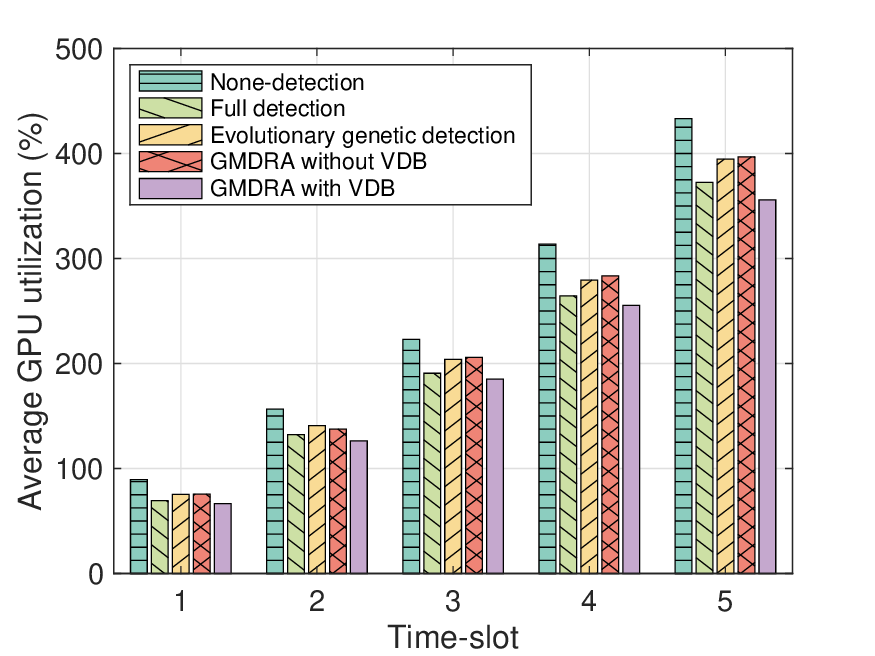}
        \caption{Average GPU utilization.}
        \label{fig4c}
    \end{subfigure}
\end{minipage}
\caption{The cumulative performance metrics for benign users under different strategies. }
\label{fig4}
\vspace{-0.8em}
\end{figure*}

\begin{figure*}[ht]
\centering
\begin{minipage}[t]{0.31\textwidth}
    \centering
    \begin{subfigure}{1\linewidth}
	\includegraphics[width=\textwidth]{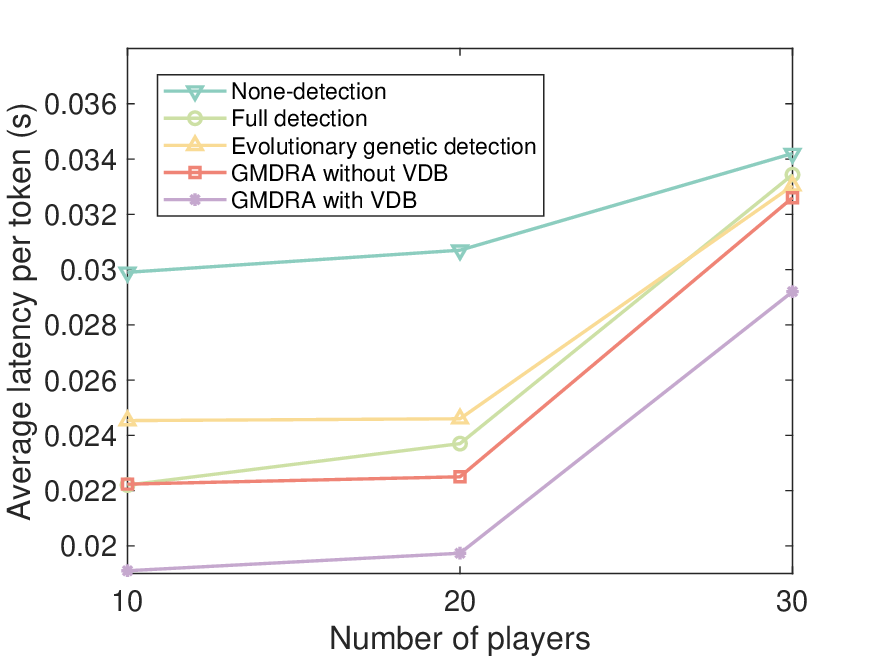}
	\caption{Average latency per token.}
	\label{fig5a}
    \end{subfigure}
\end{minipage}
\begin{minipage}[b]{0.31\textwidth}
    \centering
    \begin{subfigure}{1\linewidth}
	\includegraphics[width=\textwidth]{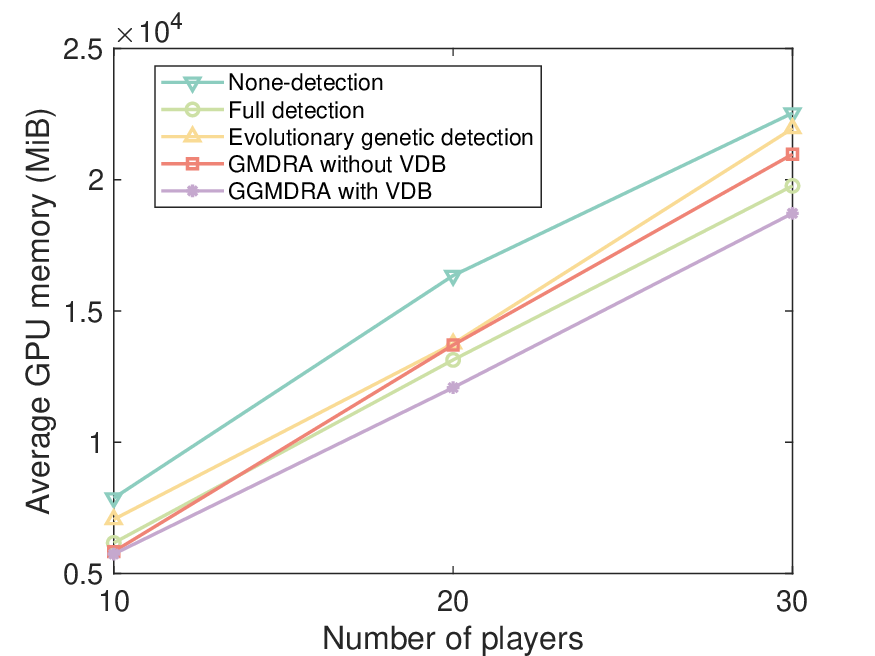}
	\caption{Average GPU memory.}
	\label{fig5b}
    \end{subfigure}
\end{minipage}
\begin{minipage}[b]{0.31\textwidth}
    \centering
    \begin{subfigure}{1\linewidth}
	\includegraphics[width=\textwidth]{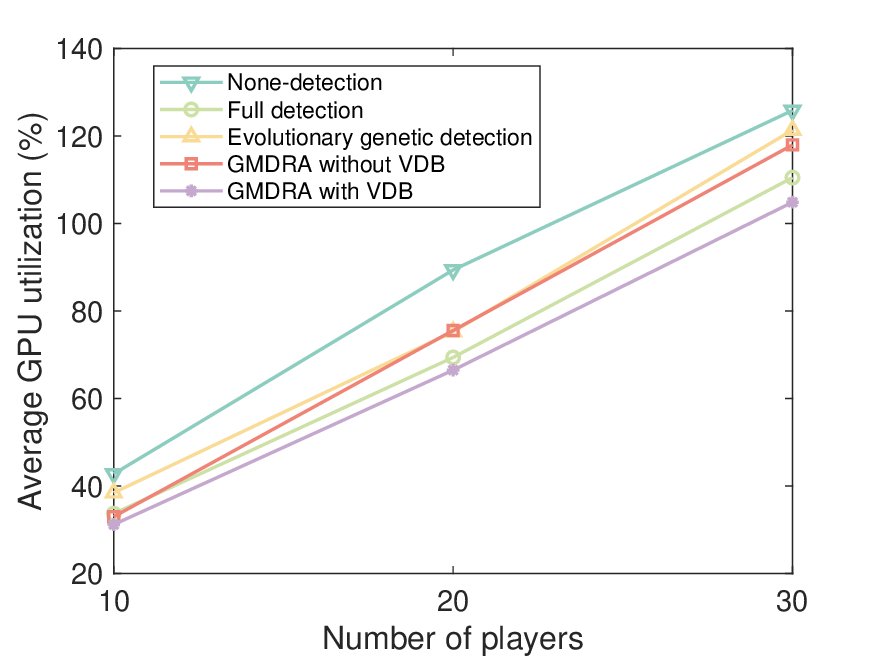}
	\caption{Average GPU utilization.}
	\label{fig5c}
     \end{subfigure}
\end{minipage}
\caption{The performance metrics for benign users under different numbers of players. }
\label{fig5}
\vspace{-0.8em}
\end{figure*}

\begin{figure*}[h]
\centering
\begin{minipage}[t]{0.31\textwidth}
    \centering
    \begin{subfigure}{1\linewidth}
	\includegraphics[width=\textwidth]{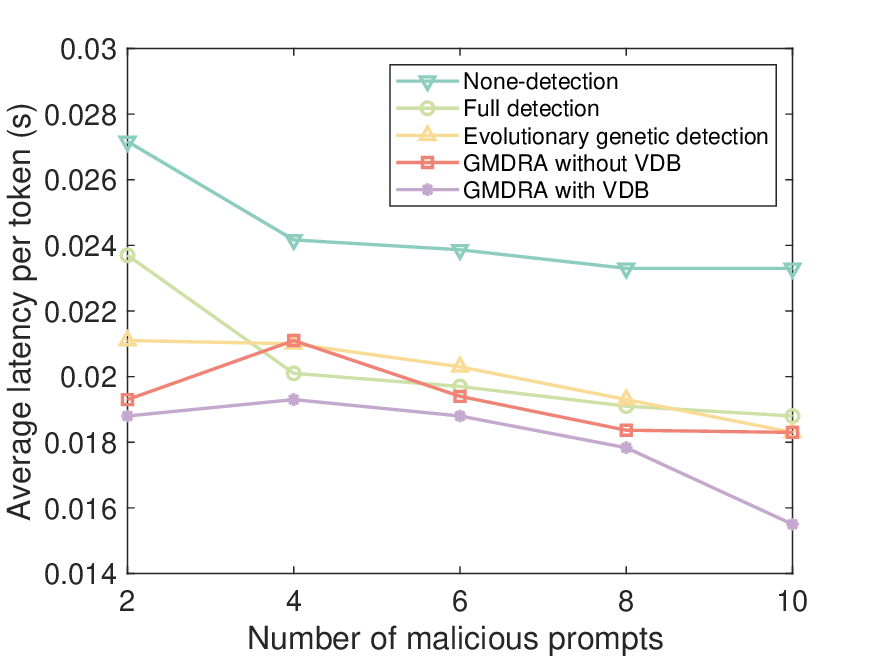}
	\caption{Average latency per token.}
	\label{fig6a}
    \end{subfigure}
\end{minipage}
\begin{minipage}[b]{0.31\textwidth}
    \centering
    \begin{subfigure}{1\linewidth}
	\includegraphics[width=\textwidth]{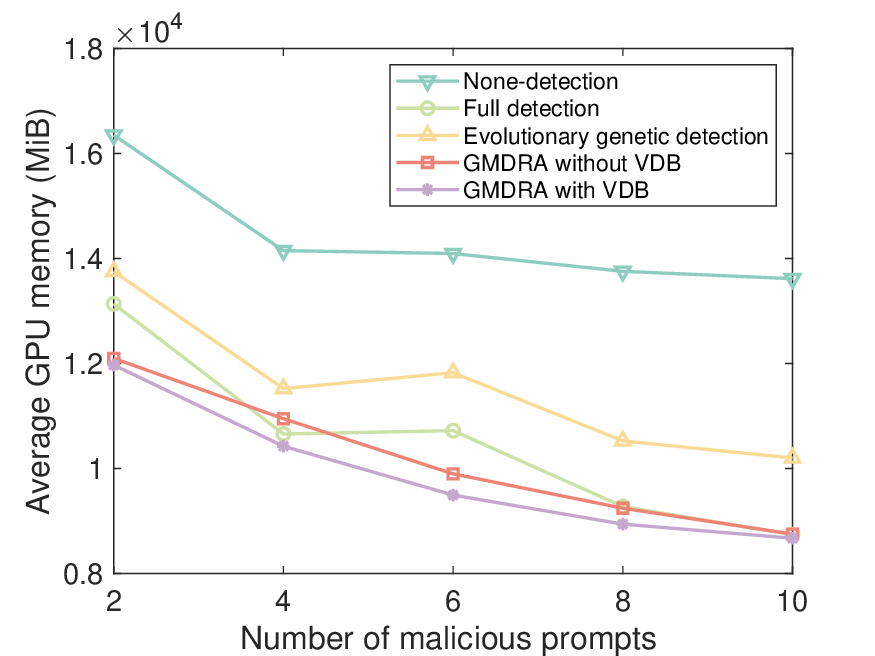}
	\caption{Average GPU memory.}
	\label{fig6b}
    \end{subfigure}
\end{minipage}
\begin{minipage}[b]{0.31\textwidth}
    \centering
    \begin{subfigure}{1\linewidth}
	\includegraphics[width=\textwidth]{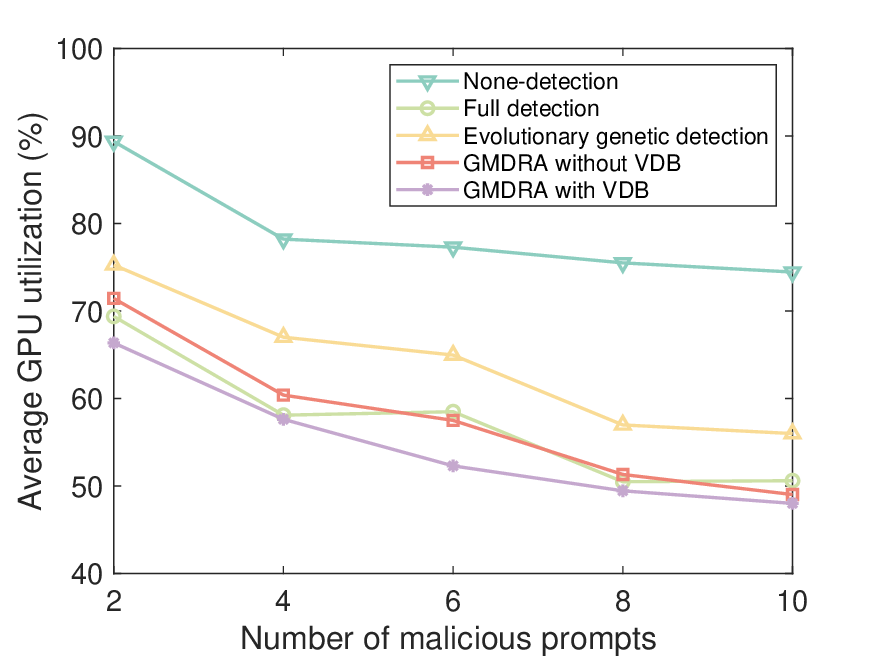}
	\caption{Average GPU utilization.}
	\label{fig6c}
     \end{subfigure}
\end{minipage}
\caption{The performance metrics for benign users under different numbers of malicious prompts. }
\label{fig6}
\vspace{-1em}
\end{figure*}



\subsection{The Influence of the Number of Players}
To explore the impact of the player count on various strategies, we design a comparative scheme with the number of malicious prompts fixed at 2 and the total number of players set to 10, 20, and 30. 

Fig.~\ref{fig5a} displays the average latency per token for benign users across different player counts. As the number of players increases, latency for all strategies increases. The average latency of none-detection is the largest, followed by the evolutionary genetic detection strategy. Among different numbers of players, our GMDRA with VDB strategy performs best.

Figs.~\ref{fig5b}-\ref{fig5c} show GPU memory usage and utilization for benign users at different numbers of players. As the number of players increases, the volume of prompts received by LLM increases, leading to a nearly linear increase in average GPU memory consumption and utilization. The non-detection strategy results in high GPU memory and usage, while other methods are more efficient in reducing resource consumption. Our GMDRA with VDB detection strategy demonstrates superior performance and stability across various player counts.

\subsection{The Influence of the Number of Malicious Prompts}
To assess the impact of different numbers of malicious prompts on our identification strategy, we conduct a comparative experiment with 20 players. Here, 10 players are malicious, and the total number of malicious prompts sent is varied at 2, 4, 6, 8, and 10.

In Fig.~\ref{fig6a}, dthe average latency per token remains relatively low despite an increase in malicious prompts. This is attributed to the decrease in benign prompts and the LLM's partial rejection of malicious submissions. When malicious players submit benign prompts, the GMDRA without VDB often misclassifies them, causing increased latency. In contrast, our GMDRA with VDB strategy consistently achieves the lowest latency for benign users.

In Figs.~\ref{fig6b}-\ref{fig6c}, average GPU memory usage and utilization decrease as the number of benign users declines. The none-detection strategy consumes significantly more GPU resources, while our GMDRA with VDB strategy shows the lowest GPU memory usage and utilization, proving effective across different scenarios with varying malicious prompts.




\section{Related Work}
\subsection{Prompt Attacks}
Prompt attacks have emerged as a significant threat to LLMs. Although security mechanisms have been developed for LLMs, they remain vulnerable to attacks such as jailbreak and prompt injection.
Wei \textit{et al.} highlighted the vulnerability of LLMs to prompt jailbreak attacks due to conflicting and competing training objectives \cite{ wei2024jailbroken}. They successfully executed jailbreak attacks on prominent models like GPT-4 and Claude v1.3, arguing that increasing model size and generalization capabilities without adequate defenses can intensify these vulnerabilities.

Xue \textit{et al.} \cite{xue2024trojllm} proposed TrojLLM, a black-box framework that generates toxic triggers to manipulate LLM outputs with specific prompts. Shi \textit{et al.} \cite{shi2022promptattack} contend that adversarial and backdoor attacks are feasible on prompt-based models, developing a malicious prompt template construction method through gradient search that diminishes LLM accuracy. Du \textit{et al.} demonstrated a poisoned prompt tuning method achieving a 99\% success rate by creating a direct shortcut between a trigger word and a target label, enabling attackers to manipulate model predictions with minimal input \cite{du2022ppt}. Greshake \textit{et al.} revealed that LLM-integrated applications blur the distinction between data and instructions, uncovering several new attack vectors \cite{greshake2023not}. They described how adversaries can exploit these applications remotely by injecting prompts into data likely to be accessed during inference.

\subsection{Defenses}
To address prompt attacks, various defense methods have been proposed, categorized into prevention-based and detection-based defenses \cite{liu2024formalizing}. Paraphrasing alters the text's appearance while retaining its meaning, while retokenization splits tokens in a prompt into smaller components. Both methods aim to neutralize malicious inputs from attackers but may unintentionally alter prompt content and affect response quality. 
Hines \textit{e al.} employed delimiting, datamarking, and encoding to improve LLMs' ability to differentiate between data and instructions, reducing the attack success rate from 50\% to 2\% with minimal impact on task effectiveness \cite{hines2024defending}. However, this approach could increase the prompt complexity and require additional computing resources.

For detection-based methods, adversarial suffixes are a feature of jailbreak attacks, which can be detected by learning adversarial suffixes \cite{ wang2024a}.
Caselli \textit{et al.} introduced HateBERT, an improved model for detecting abusive language that outperforms the general BERT model \cite{caselli2020hatebert}. While HateBERT effectively identifies objectionable content, it may fail to detect carefully crafted prompt attacks. 
However, these detection methods do not address issues related to resource consumption and latency optimization.


\subsection{Bayesian Game Model}
Harsanyi introduced Bayesian methods into game theory, formalizing the analysis of games with incomplete information \cite{harsanyi1967games}. In this framework, the concept of ``type" represents the private information of each participant, allowing for the application of Bayes' rule to update beliefs about other's types. 
Bayesian game models are widely used in network security. Yan \textit{et al.} proposed a Bayesian network game framework to analyze the dynamics of distributed denial-of-service attacks and defenses, elucidating the strategic interactions between attackers and defenders. 
Mabrouk \textit{et al.} designed an intrusion detection game based on a signaling game theory to secure data exchanges between ambulances and hospitals, using BNE to predict node types and enhance security \cite{mabrouk2023intrusion}. 
Ge \textit{et al.} \cite{ge2023gazeta} constructed  the GAZETA framework, employing a dynamic Bayesian game model for interdependent trust evaluation and authentication strategies, demonstrating significant improvements in network security.

These studies overlook the impact on benign users and do not simultaneously address system security and resource optimization, highlighting a gap in integrating comprehensive security measures with efficient resource management.

\section{Conclusion}
In this paper, we address the joint optimization of prompt security and system resource allocation in EC-LLM systems. We design an EC-LLM structure that deploys a detector and a VDB at each ES to enhance the security and efficiency of prompt processing. 
Subsequently, we formulate the PDLRO problem as a multi-stage Bayesian dynamic game model, developing solutions for predicting malicious prompts and updating beliefs through various stages, we also propose the GMDRA method to optimize detection resource allocation. Finally, tests on a real-world EC-LLM system demonstrate that our approach significantly enhances system security while reducing user latency and resource consumption.

\newpage
\bibliographystyle{IEEEtran}
\bibliography{main}

\end{document}